\newtheorem{theorem}{Theorem}
\newcommand{\reg}[2]{\ensuremath{R_{#1,#2}}}
\newcommand{\eng}[2]{\ensuremath{w_{#1,#2}}}
\newcommand{\diff}[3]{\ensuremath{\delta_#1(#2,#3)}}
\newcommand{\rL}{\diff{0}{k}{0}}
\newcommand{\rR}{\diff{0}{0}{0}}
\newcommand{\rB}{\diff{0}{1}{j}}
\newcommand{\tL}{\diff{1}{k}{0}}
\newcommand{\tR}{\diff{1}{0}{0}}
\newcommand{\tB}{\diff{1}{1}{j}}
\newcommand{\tT}{\diff{1}{1}{0}}
\newcommand{\na}{\ensuremath{\bar{a}}}
\newcommand{\nz}{\ensuremath{\bar{z}}}
\newcommand{\dg}{\ensuremath{\Delta G}}
\newcommand{\edit}[1]{\textcolor{black}{#1}}
\newcommand{\revstart}{\color{black}}
\newcommand{\revend}{\color{black}}
\begin{document}

\title[]{Can geometric combinatorics improve RNA branching predictions?}

\author{Svetlana Poznanovi\'c}\email{spoznan@clemson.edu}
\address{School of Mathematical and Statistical Sciences, Clemson University, Clemson, SC, U.S.A.}

\author{Owen Cardwell}\email{owencardwell@gatech.edu}
\address{School of Mathematics, Georgia Institute of Technology, Atlanta, GA, USA}

\author{Christine Heitsch}\email{heitsch@math.gatech.edu}
\address{School of Mathematics, Georgia Institute of Technology, Atlanta, GA, USA}

\begin{abstract}
Prior results for tRNA and 5S rRNA demonstrated that secondary 
structure prediction accuracy can be significantly improved 
by modifying the parameters in the multibranch loop entropic
penalty function. 
However, for reasons not well understood at the time, 
the scale of improvement possible across both families
was well below the level for each family when considered separately. We resolve this dichotomy here by showing
that each family has a 
characteristic target region geometry, which is distinct from 
the other and significantly different from their own dinucleotide shuffles. 
This required a much more efficient approach to computing the necessary 
information from the branching parameter space, and a new 
theoretical characterization of the region geometries. 
The insights gained point strongly to considering multiple possible 
secondary structures generated by varying the multiloop parameters. 
We provide proof-of-principle results that this significantly 
improves prediction accuracy across all 8 additional families 
in the Archive II benchmarking dataset. 
\end{abstract}

\maketitle


\section{Introduction} \label{sec:intro}

\revstart

The base pairing of an RNA sequence, i.e. the likely secondary structure(s), 
provides important molecular information~\cite{tinoco-bustamante-99} 
and can be critical to generating useful functional 
insights~\cite{miao-etal-20, schneider-etal-23, nithin-etal-24}.
The range of RNA secondary structure prediction approaches is large,
and ever-expanding as the field pursues continued accuracy improvements.
However, the most widely used 
programs~\cite{markham-zuker-08, lorenz-etal-11, reuter-mathews-10},
and their closely related approaches,
are based on optimizing a ``free energy'' score under the
nearest neighbor thermodynamic model (NNTM).

Under the NNTM, a secondary structure decomposes into well-defined 
substructures, 
e.g.\ a stack 
of two consecutive bases pairs or a hairpin loop closed by a single pair.
Each substructure is assigned a score based on the model parameters.
These values are then summed to generate the NNTM score, 
called the ``folding free energy change'' and denoted \dg, 
for the whole structure.
It is worth emphasizing that this \dg\ number is an 
\emph{approximation}
to the actual physiochemical quantity of energy that would be released 
were the full sequence somehow folded in vitro.
Nonetheless, calculating the minimum free energy (MFE) score for a sequence,
and an associated secondary structure, can yield very useful information.
This is particularly true when multiple possible ``suboptimal'' structures 
are considered rather than just a single prediction.

A secondary structure for the given RNA sequence 
is called suboptimal if its \dg\ value 
is close to the MFE score under the current NNTM evaluation.
It has long been recognized 
that prediction accuracy improves when these alternative 
structures are also considered~\cite{mathews-06, wire-review}. 
For instance, the three different
NNTM parameterizations~\cite{jaeger-turner-zuker-89, mathews1999expanded, mathews-etal-04} 
have always reported a ``best suboptimal'' accuracy  
which can be understood as an indicator of the approximation quality.
On average over the set of sequences tested~\cite{mathews1999expanded}, 
the most accurate suboptimal structure found 
(among up to 750 alternative predictions) 
had a \dg\ score within 4.8\% of the MFE.
Hence, NNTM optimization --- particularly when alternative predictions
are also considered --- captures crucial RNA base pairing
information.

It is worth considering how suboptimal structures can be generated.
Stochastic sampling from the Boltzmann 
distribution~\cite{mccaskill-90, ding-lawrence-03}
is the current standard.
However, this approach magnifies small differences in the NNTM score 
since the Boltzmann probability is exponential in the \dg\ values.
Historically, there have been other approaches,
including deterministic sampling~\cite{zuker-89} and
exhaustive generation~\cite{wuchty-etal-99},
which can produce a broader range of alternative predictions.
This can be important when the sampled structures are the
basis for further analysis.
For instance, structural variety is a crucial characteristic
in recent results~\cite{entzian-etal-21} 
motivated by RNA folding kinetics.

We provide here a \emph{mathematical} motivation for generating 
alternative predictions based on a parametric analysis of 
RNA branching~\cite{pmfe_chapt, regions, polystats, bnb}.
Using methods from geometric combinatorics~\cite{pmfe_chapt},
it is possible to identify all optimal predictions under
any parameterization of the entropic branching penalty.
However, due to rapid growth in branching polytope complexity 
with sequence length,
only two types were considered: 
transfer RNA (tRNA) and 5S ribosomal RNA (rRNA).
We demonstrated~\cite{polystats} that re-parameterizing 
the linear penalty for multiloop initiation can significantly 
improve the MFE prediction accuracy on a set of 100 sequences, 
evenly split between the two families.
New parameter combinations were obtained by intersecting 
large cliques in pairwise compatibility for the most accurate
predictions.
Perplexingly, however, the combinations
that were better for tRNA were worse for 5S rRNA, and vice versa.

To determine the exact amount of improvement possible 
on the whole set, as well as each family, 
a branch-and-bound algorithm was developed~\cite{bnb}.
The new results confirm the `exclusive disjunction' conundrum; 
re-parameterizing over the whole set yields an average 
prediction accuracy of 0.66 versus 0.54 for the 
Turner 2004~\cite{mathews-etal-04,turner2010nndb} parameters.
However, parameterizing tRNA alone improves their accuracy from 0.45 to 0.75,
and from 0.64 to 0.73 for just 5S rRNA.  
Moreover, the conclusion that the best prediction accuracy 
via branching re-parameterization is an `either-or-but-not-both' 
situation for these two families was validated by testing
the new parameters on the large number of tRNA and 5S rRNA sequences in 
the Archive II benchmarking dataset~\cite{sloma2016exact, mathews-19} 
from the Mathews Lab (U~Rochester).
Finally, to further confound the situation,
parameter combinations which seem numerically `far' from each 
other can generate statistically indistinguishable accuracies,
even on the small training set of 100 sequences.
This makes the choice of a single `best' parameter combination
seem rather arbitrary.

Here these incongruities are resolved by careful consideration of 
the parameter region geometry.
Preliminary visual inspection had suggested that, for each family,
the regions corresponding to the target conformations --- 
a Y-shape for 5S rRNA and a cloverleaf for tRNA ---  
had a characteristic shape, size, and location.
Moreover, these two sets seemed to be largely disjoint from 
each other.
If confirmed, this would reconcile the potential for improving 
accuracy within a family with the challenge for achieving 
the same level of improvement between the two families. 

Furthermore, although the shape and approximate location were 
maintained under a dinucleotide permutation of each sequence, 
the region size shrank considerably.
This suggested that the former are somehow intrinsic to the
NNTM optimization while the latter is particular to the biological
sequences.
Specifically, the unusually large region size suggests a 
robustness/stability of the family-specific branching conformation.

The results here confirm these initial insights.
This was achieved through two methodological advances. 
First, we present a reduced polytope algorithm which leverages 
the existing code base to more directly compute the parameter 
slice of interest.
The reduction in geometric complexity, and therefore compute time,
makes rigorous randomized comparisons tractable for the first time.
Second, we introduce a linear transformation of the parameter space
which illuminates the underlying geometry.
In this context, we characterize the general shape, and give 
explicit formulas for the size and midpoint location, for the
two parameter regions of interest.
We then use these two advances to give a thorough analysis of 
the geometric characteristics of the 100 training sequences
in comparison to 
a large number (12,000 per family) of random counterparts.

We confirm that the size of the family-specific target regions is extreme.
In sharp contrast to the random sequences, 
the biological ones almost always have a target branching 
conformation that is optimal under a range of parameters
that is typically 4.47 times larger for 5S rRNA and 26.33 times for tRNA.
This size, and the consistency in their midpoint locations, 
explains how a non-empty 
intersection can be found for almost all 50 sequences when  
each family is considered separately.
However, when considered together,
the pairwise overlaps between the different target regions 
are too variable to resolve into a comparable cummulative intersection.
This explains the impossibility of achieving the same level
of prediction accuracy between families as within them.

Given that this parameter geometry is intrinsic to the
NNTM optimization, we propose working with --- rather than
against --- its either-or-but-not-both consequences.
In particular, we explore the potential for improving prediction
accuracy by considering `suboptimal' structures generated 
by varying the branching parameters.
In other words, we consider a range of possible linear approximations
for the multiloop entropic penalty.
The proof-of-principle results for these alternative predictions
are quite promising.
Not only does such an approach achieve the highest accuracy
thus far on the tRNA and 5S rRNA training family counterparts in 
the Archive II dataset,
but it significantly improves the accuracy for all 8 other families
as well.

\revend


\section{Background} \label{sec:back}

\revstart

\subsection{NNTM parameterization}

The quality of the NNTM approximation is fundamentally dependent on the 
parameters used to score the substructures.
Recall that there have been three major 
versions~\cite{jaeger-turner-zuker-89, mathews1999expanded, mathews-etal-04},
referred to as the
Turner 1989 (T89), Turner 1999 (T99), and Turner 2004 (T04) sets. 
The `modern' T99 and T04 values are available online through the nearest 
neighbor database (NNDB)~\cite{turner2010nndb}.
For comparison, the `historic' T89 ones 
can also still be accessed%
\footnote{%
\edit{Via the Internet Archive Wayback Machine 
as ``Version 2.3 free energy paramters for RNA folding'' 
at \url{http://www.bioinfo.rpi.edu/~zukerm/cgi-bin/efiles.cgi?T=37}.
The T99 set was originally called ``Version 3.0''.}}
online.

Different aspects of the NNTM are recognized as more or less well-determined.
The most reliable are the $\Delta G$ values for the 21 distinct base pair 
stacks.
They are firmly grounded in experimental 
data~\cite{freier-etal-86, xia-etal-98}, and did not change
from T99 to T04. 
At the other extreme, the standard form for the branching entropic 
penalty (discussed in detail below) is typically regarded as 
``ad-hoc'' at best~\cite{ward2017advanced}, and involves
three parameters which have changed substantially with each
new version.

In between these extremes are the large number of parameters 
for all the other components of the NNTM: 
the dangling ends and terminal mismatches as well as the hairpin,
bulge, internal, and external loop scores.
For instance, the `small' internal loops,  i.e.\ the ones containing
two base pairs and having 
$1 \times 1$, $1 \times 2/2 \times 1$, or $2 \times 2$ single-stranded
bases on each side, were introduced in T99.
They have more than $7,000$ parameters among them, accounting for the 
5' -- 3' symmetry.
These parameters were updated for T04 based on new experimental 
data (see citations in~\cite{mathews-etal-04}).
The authors indicate that\ldots
\begin{quote}
``{\ldots}measured values are used when available for $1 \times 1$, $1 \times 2$, and $2 \times 2$ internal loops, but approximations are used for most internal loops. The range of measured free energies differs for different types of internal loops. For example, the range is roughly 2 and 6 kcal/mol for $1 \times 3$ \emph{(sic)} and $2 \times 2$ loops, respectively. Evidently, different types of loops require different approximations.''
\end{quote}

The NNTM approximation for 
multibranch loops (also known as junctions or multiloops) has two components:
an initiation term, which is understood as an entropic penalty,
and a stacking term for the favorable intra-loop interactions.
In principle~\cite{turner2010nndb},
the stacking term is ``the optimal configuration of dangling ends, terminal mismatches, or coaxial stacks, noting that a nucleotide or helix end can participate in only one of these favorable interactions.''
In practice,
 the extent to which these stacking interactions
are part of the NNTM optimization can vary. 
For instance, RNAfold~\cite{lorenz-etal-11} has four different dangling 
end options, and the default allows
``a single nucleotide to contribute with all its possible favorable interactions.''
This approximation, known as the ``d2'' 
option\footnote{\edit{For completeness, 
the \texttt{pmfe} code~\cite{pmfe_chapt} 
was based on \texttt{Gfold}~\cite{gtfold-epub} which implemented the 
common dangling options at that time.
The \texttt{pmfe} default, which has been used for all parametric 
analysis results thus far,
corresponds to RNAfold's ``d1'' option, although d2 is also implemented.}},
facilitates efficient partition function calculations but definitely 
can alter the MFE score, and hence the accompanying structural prediction.

The standard form for the initiation term/entropic penalty is:
\[\dg_{\text{init}} = a + b \cdot [\text{number of unpaired nucleotides}] 
+ c \cdot[\text{number of branching helices}] \]
for branching parameters $(a,b,c)$ whose
values have changed considerably over time:
\[ \text{T89} = (4.6, 0.4, 0.1) \hspace*{10ex}
\text{T99} = (3.4, 0, 0.4) \hspace*{10ex} 
\text{T04} = (9.3, 0, -0.6).\]
We particularly note the substantial increase in $a$ and the negative $c$
value in the T04 parameters, as well as that $b = 0$ in both T99 and T04.

It is critical to appreciate that the T89 and T99 ones were 
\emph{learned}, 
i.e.\ they were derived by maximizing the MFE prediction accuracy 
over a set of known structures.
In contrast, the current T04 values are based on experimental data from
small model systems~\cite{diamond-turner-mathews-01, mathews-turner-02}.
There is, however, an important caveat.
The optical melting data was used to parameterize a different function,
where the term involving $b$ above is replaced by 
$b' \cdot [\text{average asymmetry}]$
with $b' = 0.91 \pm 0.19$.
However~\cite{mathews-etal-04}, 
this term is ``neglected'' 
since the average asymmetry computation 
``cannot be accommodated in a dynamic programming algorithm.''
Effectively, then this more accurate model is reduced to the simple
linear one with $b = 0$.

Recall that the linear penalty function was originally introduced for 
computational efficiency~\cite{jaeger1989improved, zuker-mathews-turner-99}. 
However recent results~\cite{ward2017advanced} have demonstrated 
that it out-performs both the logarithmic penalty~\cite{mathews1999expanded}
based on Jacobson--Stockmayer theory
and also one based on polymer theory~\cite{aalberts-nandagopal-10}.
From this the authors conclude that ``the simplest model is best.''

Interestingly, their results included re-parameterizing the logarithmic
and the polyer models, since some of the original values were 
learned using the T99 parameters for the rest of the NNTM.
Their 40 sequence training set was evenly split between 
tRNA and 5S rRNA.  
They found that the newly optimized parameters for both models\ldots
\begin{quote}
``\ldots{i}mprove performance for tRNA, but not for 5S rRNA. This is unexpected, as an equal number of tRNA and 5S rRNA were used for training. Looking at the performance of the parameters on a per RNA basis in the training set suggests that either the performance on tRNA could be increased, or the performance on 5S rRNA could be increased, but not both. We hypothesize that this means that the models themselves are insufficient to describe the space of multi-loop free energies completely.''
\end{quote}

\revend

\subsection{\edit{Parametric Analysis}}

In recent work, tools from geometric combinatorics were useful in performing a complete parametric analysis and quantifying the robustness of the minimum free energy (MFE) prediction as the branching parameters are changed~\cite{polystats}.  The analysis showed that improvement of the prediction for multiple families is possible but difficult because competing parameter combinations favor different families~\cite{bnb}. Whereas previous results focused on isolated parameter combinations, here we explicitly consider the regions of equivalent parameters which yield the same optimal structure. Each of the regions considered yields a structure with a distinct branching pattern. We show that for each of the families, the target branching region has a characteristic geometry which is different from that of the other.

Previous work~\cite{pmfe_chapt} introduced a formulation of the MFE prediction as a linear program focusing on the branching characteristics.  In particular, the \emph{branching signature} of an RNA secondary structure $S$ was defined to be a quadruple $(x,y,z,w)$ where $x$, $y$, and $z$ are the total number, respectively, of multibranch loops (a.k.a. junctions) as well as of single-stranded bases and of helices in those loops, and $w$ is the residual free energy from all the other structural components. The first three correspond to the cumulative number, size, and complexity of the predicted branch points. 
\begin{equation} \label{eq:energy} \Delta G(S)= ax+by+cz+dw, \end{equation} 
where $a$, $b$, and $c$ are the standard components of the NNTM multiloop initiation and $d$ is a scaling term to complete the linear program. Because of this, using methods from geometric combinatorics --- specifically convex polytopes and their normal fans --- it is possible to give a complete analysis of the effect that the branching parameters have on determining the optimal secondary structure.

The \texttt{pmfe} code introduced in~\cite{pmfe_chapt} finds all
vertices of the \emph{branching polytope} for the given RNA sequence.
Each vertex is a branching signature which minimizes Eq.~\eqref{eq:energy}
for some set of real values $(a,b,c,d)$.
The polytope's normal fan is then a conic subdivision of the 4d parameter space
with maximal cones containing all parameters that yield the same optimal
signature.
In the NNTM model, the last parameter $d$ is fixed at $1$, and hence
a parametric analysis need only analyze the $d=1$ slice of this
normal fan.
In this way, we find the optimal signatures, and their associated
secondary structures, for every combination of possible NNTM
branching parameters.

In~\cite{polystats}, this geometric combinatorics approach gave a
complete analysis of the prediction accuracy, stability, and robustness
for a diverse set of 50 tRNA and 50 5S rRNA sequences.
By analyzing all possible optimal secondary structures for each sequence,
we found that every one had a high accuracy prediction for some
combination of $(a,b,c,1)$ values.
However, it was not possible to achieve this level of improvement
for all sequences with the same parameter triple
--- not even within the same family.

Subsequently, a branch-and-bound algorithm~\cite{bnb} showed
that there exist parameters that significantly improve prediction accuracy
for the tRNA test sequences and other combinations that work for the
5S rRNA ones.
Moreover, the best possible improvement over both families simultaneously
is significantly lower.
These conclusions were tested against a much larger dataset (Archive II),
and found to remain valid.
Hence, it was well-understood that changing the default NNTM
branching parameters
can significantly improve secondary structure prediction accuracy,
but not why different combinations are required for tRNA and for
5S rRNA.

To resolve this question, it was determined to focus on the $b = 0$
slice of the $d = 1$ normal fan hyperplane.
This agrees with the current NNTM branching parameters and is supported
by results~\cite{bnb} showing that different $b$ values
near zero have equivalent maximum accuracy improvements.
Moreover, it allows to focus clearly on the geometry of the optimal
parameter regions in the $(a,c)$ plane, particularly the two which
correspond to the target branching signatures for tRNA and for 5S rRNA.
We find large overlap in the target region parameters within each family,
but that these family-specific values occupy essentially distinct
parts of the parameter space from each other.
Furthermore, we show that this region geometry is a true biological
signal by comparison with a large randomized dataset.

Since computing the branching polytope for the sequence lengths we consider takes several hours, we develop an algorithm for computing the $b=b_0, d=1$  slice for any fixed value $b_0$ and use it to generate our data. \edit{Our algorithm leverages the existing code but computes the desired slice without computing the branching polytope.} We show that the location and geometry of the target regions depend on the differences of the residual energies corresponding to structures to close branching patterns and we give exact formulas for the target branching of tRNA and 5S rRNA sequences. Analysis of the distributions of the differences explains the previously observed large overlap of the same target regions for the homologous sequences. Moreover, results show that the target regions are present with higher frequency and bigger area for the biological sequences compared to random ones. We give details about which differences of residual energies have the biggest effect on the difference in the geometry of the regions. Finally,  based on the insights from the data analysis, we consider secondary structure prediction by generating multiple structures which correspond to branching parameters in the neighborhood of the target regions for tRNA and 5S RNA. We show that if the $(a,c)$ branching parameters are allowed to vary over a relatively small region, we get a measurable increase in the prediction accuracy for \edit{all} additional families in the Archive II benchmarking data set.


\section{Methods}

\subsection{Algorithm for slice computation} \label{subsec:algorithm}

An RNA secondary structure refers to the base-pairing interactions within a single RNA molecule, which gives rise to specific shapes or motifs: helices made of stacked base pairs separated by different types of loops (bulges, internal and hairpin loops, and junctions). The junctions are regions where 3 or more helices converge. Recall that the signature of an RNA secondary structure is the  quadruple $(x,y,z,w)$ where $x$, $y$, and $z$ are the total number, respectively, of junctions, single-stranded bases and helices in those loops, and $w$ is the residual free energy from all the other structural components. For a given RNA sequence $s$, its branching polytope $P_B$ is the convex hull of the set of signatures that correspond to secondary structures over $s$.   For a face $F$ of a polytope $P$, its corresponding normal cone is the set of all vectors $\textbf{A}$ such that any point $\textbf{v} \in F$ minimizes the product $\textbf{A}\textbf{v}$. The normal fan of a polytope $P$ is the collection of its normal cones and is denoted by $\mathcal{N}(P)$. The full dimensional cones of $\mathcal{N}(P)$ correspond to the vertices of $P$. 

Recall that $d$ is a ``dummy'' parameter introduced to complete the linear program, and only biologically relevant when $d = 1$. The $d=1$ slice of $\mathcal{N}(P_B)$ is a subdivision of the $a,b,c$ - parameter space in convex regions, so that branching parameters from the same region yield the same optimal branching signature. For our data analysis, we also specialize $b = 0$, which is consistent with both the prior T99 and current T04 branching parameters.

Prior results had characterized the infinite
regions theoretically~\cite{regions} and both types
computationally~\cite{polystats}.
It was shown that the cones are quite ``thin'' in the $b$ 
dimension~\cite{polystats},
but also that this does not make a statistically significant
difference to the prediction accuracy~\cite{bnb}. 
In other words, although individual predictions are not necessarily
robust to small changes in $b$, there are $(a,c)$ values with 
equivalent accuracy on average whether $b \leq 0$ or $b \geq 0$.

In this section we describe an algorithm for computing the $b = b_0, d=1$ slice (from now on a $b_0$-slice) of $\mathcal{N}(P_B)$, for a fixed value $b_0 \in \mathbb{R}$, which \edit{leverages the existing code for polytope computation but} avoids computing $P_B$ itself. The flow of the algorithm is represented in Figure~\ref{fig:new_flow_figure}. It uses the same main components used in the \texttt{pmfe}~\cite{pmfe_chapt} branching polytope calculations with some modifiers. In particular, the \texttt{iB4e} module we use is an adaptation of the software developed by Huggins~\cite{huggins2006ib4e}. In our case, this module internally incrementally builds a 3d polytope, denoted in the explanation below by $P_3$, by extending it in various directions as in the Beneath-and-Beyond algorithm~\cite{grunbaum1967convex}. The main idea behind the Beneath-and-Beyond algorithm is to build a desired polytope by starting with a small initial approximation and then try to confirm the facets of the approximation by checking that there are no vertices 'beyond' it. When a new vertex is found (and thus the confirmation of the facet fails), it is used to refine the approximation. 

The polytope $P_3$ that our algorithm builds internally is the image of the branching polytope $P_B$ under the transformation
$T: \mathbb{R}^4 \to \mathbb{R}^3$ given by \[T(\textbf{v}) = \begin{bmatrix} 1 & 0 & 0 & 0 \\ 0 & 0 & 1 & 0 \\ 0 & b_0 & 0 & 1 \end{bmatrix} \textbf{v}\]
and is not output to the user. While $P_3$ is not output itself, the algorithm outputs signatures which are preimages of the vertices of $P_3$. We denote the convex hull of those signatures by $P_4$. The steps of the algorithm are illustrated in Figure~\ref{fig:new_flow_figure} and the arrows refer to the following steps:
\begin{itemize} 
\item[(a)] Each time  \texttt{iB4e} tries to confirm a facet of $P_3$, it outputs a vector  $\textbf{p} = (a,c,d)$.
\item [(b)] The  \texttt{find-mfe} module finds an MFE structure for the vector $\textbf{p'} = (a, b_0d, c, d)$.
\item [(c)] The \texttt{scorer} module finds the $(x,y,z,w)$ signature of the MFE structure.
\item [(d)] The vector $(x,z,w+b_0y)$ is passed back to \texttt{iB4e} to either confirm the facet with normal $\textbf{p}$ or to extend $P_3$ by a new vertex.
\item [(e)] If the facet is not confirmed,  $(x,y,z,w)$ is stored in the .rnapoly file.
\end{itemize}

\begin{figure}[htbp]
  \centering
 \includegraphics[width = .75\linewidth]{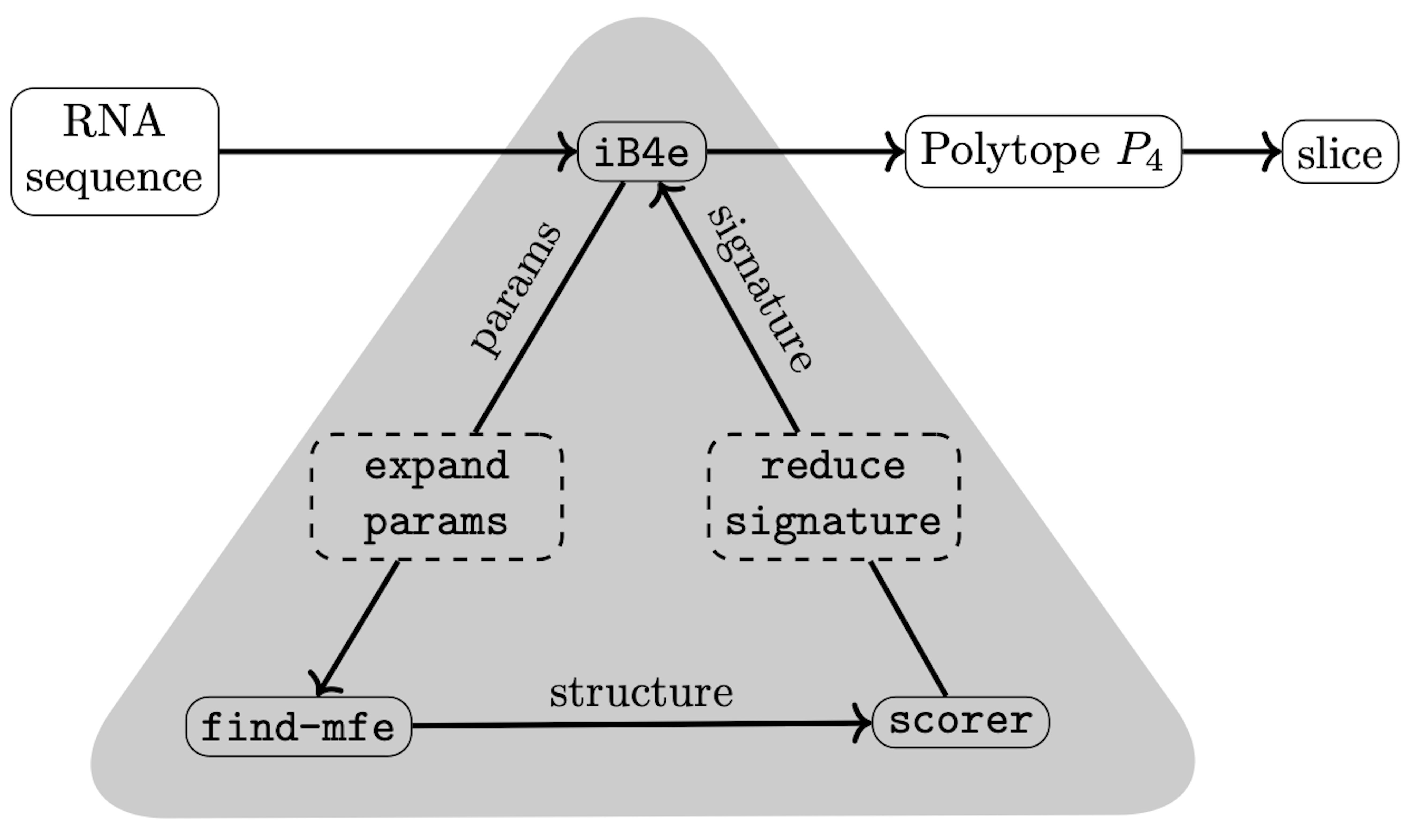}
\caption{Control flow of computing the $b_0$-slice of $\mathcal{N}(P_B)$. The dashed boxes represent modifications introduced to the algorithm for computing the branching polytope $P_B$~\cite{pmfe_chapt}.}
  \label{fig:new_flow_figure}
\end{figure}

The final .rnapoly file contains points whose convex hull is a 4d polytope $P_4$.

\begin{theorem} The polytope $P_4$ has the property
\[ \mathcal{N}(P_B) \cap \{b=b_0, d=1\} = \mathcal{N}(P_4) \cap \{b=b_0, d=1\}. \] 
\end{theorem}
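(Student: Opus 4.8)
The plan is to reduce the claim to the single identity $T(P_B)=T(P_4)$, after which the equality of the two sliced normal fans follows from elementary facts about how normal fans behave under a linear map. Write $T^{\top}\colon\mathbb R^3\to\mathbb R^4$ for the transpose of $T$, so that $T^{\top}(a,c,d)=(a,b_0d,c,d)$ and $\langle \mathbf p,T\mathbf v\rangle=\langle T^{\top}\mathbf p,\mathbf v\rangle$ for all $\mathbf p\in\mathbb R^3$ and $\mathbf v\in\mathbb R^4$. When \texttt{iB4e} issues a query $\mathbf p=(a,c,d)$ in step (a), the vector passed to \texttt{find-mfe} in step (b) is $\mathbf p'=(a,b_0d,c,d)=T^{\top}\mathbf p$; since the NNTM energy of a structure with signature $\mathbf v$ is $\langle\mathbf p',\mathbf v\rangle=\langle\mathbf p,T\mathbf v\rangle$, the returned signature $\mathbf v$ has the property that $T\mathbf v$ minimizes $\langle\mathbf p,\cdot\rangle$ over the image under $T$ of the finite set of all signatures over the sequence. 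Because minimizing a linear functional over a finite set agrees with minimizing it over the convex hull, and $T$ is linear, $T\mathbf v$ in fact minimizes $\langle\mathbf p,\cdot\rangle$ over $T(P_B)=:P_3$. Thus the vector returned to \texttt{iB4e} in step (d) is exactly the answer of a correct linear-optimization oracle for $P_3$.

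Next I would invoke correctness of the Beneath-and-Beyond construction underlying \texttt{iB4e}~\cite{huggins2006ib4e,grunbaum1967convex}: when driven by a correct linear-optimization oracle for a polytope, it terminates with the convex hull of the points the oracle has returned, and this equals the target polytope. Every returned point lies in $P_3$, so the running approximation $P_3^{\mathrm{approx}}$ is always contained in $P_3$; the facet-confirmation condition at termination guarantees that no signature image lies strictly beyond a facet of $P_3^{\mathrm{approx}}$, i.e.\ every signature image lies in $P_3^{\mathrm{approx}}$, which forces $P_3\subseteq P_3^{\mathrm{approx}}$ and hence $P_3^{\mathrm{approx}}=P_3$. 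The signatures written to the \texttt{.rnapoly} file form a set $V$ with $P_4=\mathrm{conv}\,V$, and $P_3^{\mathrm{approx}}=\mathrm{conv}\,T(V)=T(P_4)$; therefore $T(P_4)=P_3=T(P_B)$. I expect this to be the main obstacle: one must be careful that the MFE oracle may return an optimal point of $P_3$ that is not a vertex, and check that facet confirmation nonetheless upgrades ``$P_3^{\mathrm{approx}}\subseteq P_3$'' to equality. The remaining step is then routine linear algebra with normal cones, and in particular does not use full-dimensionality of $P_4$.

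Finally I would show that for any polytope $Q\subseteq\mathbb R^4$ with $T(Q)=P_3$, the slice $\mathcal N(Q)\cap\{b=b_0,\,d=1\}$ depends only on $P_3$; applying this with $Q=P_B$ and with $Q=P_4$ then completes the proof. A parameter vector of the slice has the form $\mathbf A=(a,b_0,c,1)=T^{\top}\mathbf p$ with $\mathbf p=(a,c,1)$, and $\langle\mathbf A,\mathbf v\rangle=\langle\mathbf p,T\mathbf v\rangle$; hence the face of $Q$ minimized by $\mathbf A$ is $\{\mathbf v\in Q:T\mathbf v\in G_{\mathbf p}\}=T^{-1}(G_{\mathbf p})\cap Q$, where $G_{\mathbf p}$ is the face of $P_3$ minimized by $\mathbf p$ (this uses $T(Q)=P_3$). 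Given a face $F$ of $Q$, its normal cone $N_Q(F)$ consists of the $\mathbf A$ with $F\subseteq T^{-1}(G_{\mathbf p})\cap Q$; applying $T$ and using $T\bigl(T^{-1}(G)\cap Q\bigr)=G\cap T(Q)=G$ for every face $G$ of $P_3$, this containment is equivalent to $\widetilde F\subseteq G_{\mathbf p}$, where $\widetilde F$ is the smallest face of $P_3$ containing $T(F)$. Consequently $N_Q(F)\cap\{b=b_0,d=1\}=T^{\top}\bigl(N_{P_3}(\widetilde F)\cap\{d=1\}\bigr)$, an expression involving only $P_3$. As $F$ ranges over the faces of $Q$, $\widetilde F$ ranges over all faces of $P_3$, so the collection of cells of $\mathcal N(Q)\cap\{b=b_0,d=1\}$ equals $\{\,T^{\top}\bigl(N_{P_3}(G)\cap\{d=1\}\bigr):G\text{ a face of }P_3\,\}$, independent of $Q$. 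Taking $Q=P_B$ and $Q=P_4$ --- legitimate since $T(P_B)=T(P_4)=P_3$ by the previous step --- yields $\mathcal N(P_B)\cap\{b=b_0,d=1\}=\mathcal N(P_4)\cap\{b=b_0,d=1\}$, as claimed.
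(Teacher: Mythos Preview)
Your proof is correct and considerably more detailed than the paper's. The paper gives a brief two-inclusion argument: it records the identity $(a,b_0d,c,d)\cdot(x,y,z,w)=ax+cz+d(w+b_0y)$, asserts that ``by construction'' the vertices of $P_4$ are the MFE-minimizers for all parameter vectors of the form $(a,b_0d,c,d)$, and then argues directly that a point $(a',b_0,c',1)$ lies in a normal cone of $P_4$ corresponding to a signature $(x,y,z,w)$ if and only if it lies in the corresponding normal cone of $P_B$. Your route is more structural: you make the key identity $T(P_4)=T(P_B)=P_3$ explicit and justify it via the correctness of the Beneath-and-Beyond oracle (a step the paper leaves implicit), and then prove the general lemma that for \emph{any} polytope $Q$ with $T(Q)=P_3$ the slice $\mathcal N(Q)\cap\{b=b_0,d=1\}$ equals $\{T^\top(N_{P_3}(G)\cap\{d=1\}):G\text{ a face of }P_3\}$ and hence depends only on $P_3$. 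This factorization through $P_3$ makes transparent why the particular preimages written to the \texttt{.rnapoly} file do not matter, and would apply unchanged to any other oracle returning optimal signatures; the paper's shorter argument reaches the same conclusion but relies on $T(P_4)=P_3$ without ever stating it.
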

\begin{proof}
Since \[(a, b_0d, c, d) \cdot (x, y, z, w) = ax + cz + d(w+ b_0y),\] the vertices of $P_4$ correspond to MFE minimization by vectors $(a, b_0d, c, d)$ for all triples $(a,c,d)$ by construction. Let $(a',c')$ be in the $b_0$-slice of $\mathcal{N}(P_4)$. Then $(a',b_0,c',1)$ is in $\mathcal{N}(P_4)$ corresponding to a signature $(x, y, z, w)$, which means $(a',c')$ is in the $b_0$-slice of $\mathcal{N}(P_B)$ corresponding to the same signature. Therefore, \[\mathcal{N}(P_4) \cap \{b=b_0, d=1\} \subseteq \mathcal{N}(P_B) \cap \{b=b_0, d=1\}.\]The converse inclusion follows similarly.
\end{proof}

The complexity of the Beneath-and-Beyond algorithm for constructing a polytope is $O(V + F)$, where $V$ is the number of vertices and $F$ is the number of facets of the polytope being computed (since the objective vectors are chosen so that after each iteration, either a new vertex of the convex hull is found or a facet of the convex hull is confirmed the method requires running the LP solver for no more than $O(V + F)$ objective vectors). Compared to $P_B$, the new polytope $P_4$ has a significantly fewer number of vertices and facets. This results in significantly faster computations of the desired slice. Including the step required to construct a polytope object that can be analyzed (for this we used \texttt{SageMath}~\cite{sagemath}), the computation time reduced from $\sim$50 min to $\sim$83 s for tRNA and from $\sim$11 h to $\sim$8 min for 5S rRNA. The code for computing a $b_0$-slice is available at: \url{https://github.com/gtDMMB/pmfe2023}.

\subsection{Slice geometry} \label{subsec:geom}

We consider the normal fan associated with an RNA branching polytope,
and investigate finite regions in the two-dimensional 
slice obtained by specializing the $b$ and $d$ parameters.
As with the previous algorithmic results, the transformation of 
the parameter space is presented for an arbitary $b_0$-slice, 
i.e.\ the $(a,c)$ plane obtained for any fixed $b = b_0$ (and $d = 1$).
We then characterize the regions corresponding to the 
target branching signatures for tRNA and 5S rRNA, focusing on 
the geometry for $b = 0$.

\subsubsection{Excess branching formulation of linear program}

We begin by reformulating the optimization to better
illuminate the slice geometry.
Since every junction must contain at least three helices,
$z = 3x + \nz$ where the new variable $\nz \geq 0$ denotes 
the total `excess' branching beyond the $3x$ minimum.
Then
\[ (a,b_0,c,1) \cdot (x,y,z,w) = (\na,b_0,c,1) \cdot (x,y,\nz,w)\]
where the new parameter $\na = a + 3c$ is the minimum penalty per junction.

\newcommand{\figtab}[1]{\includegraphics[width = 2.5in]{#1}}

\begin{figure}
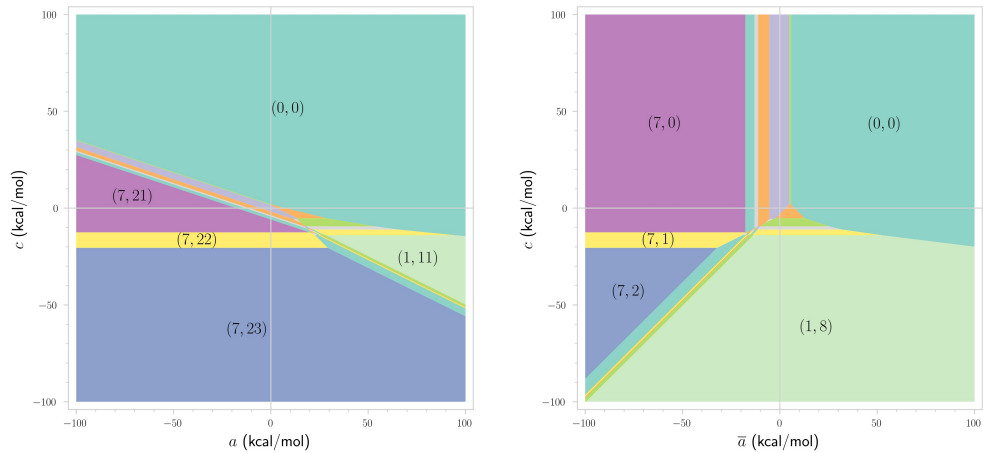

\centering
\begin{tabular}{cc}
\figtab{BX569691_labels_large_skew} & \figtab{BX569691_labels_large}
\end{tabular}
\caption{%
\edit{
The excess branching transformation is illustrated by 
two $b = 0$, $d = 1$ parameter slices for a biological tRNA sequence.
Corresponding regions are colored the same, and larger ones are 
labeled with the appropriate reduced signature.
For example, the $(7,21)$ purple triangle in the original $(a,c)$ plane
on the left is transformed 
into the $(7,0)$ rectangle in the $(\na,c)$ plane on the right.
The parameter range used is only to illustrate the mathematical 
transformation, and certainly not biologically meaningful.
}
\label{fig:unskew}}
\end{figure}

As seen in Figure~\ref{fig:unskew}, this transformation eliminates 
the 1/3 skew in the original $(a,c)$ plane. 
Results from~\cite{regions} 
--- appropriately reinterpreted --- still hold.
As before, all sequences have a unique region $(0,0,0,w_0)$ with 
the minimum number of junctions, the minimum (total or excess) branching,
and the minimum residual free energy $w_0$ over all such structures.
This infinite region always dominates the upper-right quandrant.

There is at most one optimal region with a given number of 
junctions and branches in each $b_0$-slice.
Call these two values the reduced signature for that region, and 
denote the region  as $\reg{x}{z}$ or $\reg{x}{\nz}$ as appropriate.
In the $(a,c)$ plane, adjacent $\reg{x}{z}$ and $\reg{x'}{z'}$  
are separated by a horizonal line if and only if $x = x'$.
This remains true in $(\na, c)$.
Moreover, now the dual holds as well;
adjacent regions $\reg{x}{\nz}$ and $\reg{x'}{\nz'}$  
are separated by a vertical line if and only if $\nz = \nz'$.

There is a maximum number of junctions, denoted $x_{max}$, over the
branching polytope.
We assume $x_{max} > 0$, which holds 
except for extreme cases (such as {\scshape ggggcccc})
\edit{which have no possible structures containing any junctions}.

There is always an infinite region $(1, \nz_{max}(1))$, 
that is a region where the number of excess branches for the single junction
is the largest possible over the entire polytope.
This now dominates the lower $(\na, c)$ half-plane  
pictured in Figure~\ref{fig:unskew}.
It is adjacent to $(0,0)$ since crossing a 
region boundary moving horizontally to the right must decrease $x$.
Likewise, moving vertically upward must decrease $\nz$ if a
boundary is crossed. 
The slope of the boundary is determined by 
what happens to the other variable.
In particular, if regions $(x,y,\nz,w)$ and $(x',y',\nz', w')$
are adjacent, then their boundary line in the $b_0$-slice is 
\[ \Delta_x \na + \Delta_{\nz} c = - \Delta_y b_0 - \Delta_w \]
for $\Delta_x = x - x'$, etc.

Since an excess branch can always be removed from a structure,
there exist reduced signatures (not necessarily optimal)
for $(1,\nz)$ with $0 \leq \nz \leq \nz_{max}(1)$.
When these signatures are optimal, they dominate the finite regions
in the $b_0$-slice and form the left boundary of \reg{0}{0}.

When there exists a signature $(x_{max}, y, 0, w)$, that is 
when $\nz_{min}(x_{max}) = 0$, then 
the other infinite regions in the upper half-plane are
vertical `strips' $(x, 0)$ with $0 < x < x_{max}$.
This is true for all polytopes analyzed here, but can fail 
for sufficiently pathological\footnote{
For instance,
{\scshape 
gggaaacgaaaccggaaacgaaaccggaaacgaaaccc}
has $x_{max} = 4$ but $\nz_{min}(x_{max}) = 1$ while 
for $\nz_{min} = 0$, $x_{max}(\nz_{min}) = 3$.
We also note that $z_{max}(x_{max})$ is not always $z_{max}$,
e.g.
{\scshape
gggaaacgaaacgaaacgaaaccggaaacgaaacgaaacgaaaccc}
has $x_{max} = 4$ and $z_{max} = 13$ but $x_{max}(13) = 3$
and $z_{max}(4) = 0$.
}sequences. 

The other infinite regions in the lower-half plane 
are $(x_{max}, \nz_{max}(x_{max}))$,
whose skewed version dominates the $(a,c)$ one,
and a progression of $(x_i, \nz_{max}(x_i))$ 
with $1 =  x_0 < x_{1} < \ldots < x_{max}$.
As with the vertical strips in the upper half-plane, the $x$ 
values are not necessarily contiguous.
While the slope of their boundary lines is often 1,
smaller values are not uncommon\footnote{
Previously~\cite{regions}, we had noted that 
{\scshape
accccgacccuuuucccagccca}
violates the common progression 
$\nz_{max}(x) \leq \nz_{max}(x-1) - 1$ 
since $\nz_{max}(2) \geq 0$ but $\nz_{max}(1) = 0$.
The progression holds when the $\nz_{max}(x)$ structure is 
``planted,'' i.e. has an external loop with only one branch.}
for biological sequences.
Finally, if there is an $(x_{max}, \nz)$ region with 
$\nz_{min}(x_{max}) < \nz < \nz_{max}(x_{max})$, then 
this will be a horizontal infinite `strip' as 
with the $(7,22)/(7,1)$ one in Figure~\ref{fig:unskew}.

\subsubsection{Characterizing tRNA and 5S rRNA target regions}
\label{subsec:character}

The reduced excess branching signature for the classic tRNA 
cloverleaf secondary structure is $(1,1)$, while the Y-shape
5S rRNA one is $(1,0)$. 
We refer to these as the target region for the respective family.
We will show that their geometry typically
displays a clear biological signal, as pictured in Figure~\ref{fig:target},
in contrast to the randomized comparisons.
For clarity, the results are presented for $b = 0$, but it is 
straightforward to generalize to arbitrary $b_0 \neq 0$.

As seen in Figure~\ref{fig:target}, it is possible to have 
a $b_0$-slice with \reg{1}{0} but not \reg{1}{1} and vice versa.
In general, many of the random sequences have both and some neither.
If a target region exists, then its right boundary is always \reg{0}{0}.
The left boundary is \reg{k}{0} for $k > 1$ and bottom is 
\reg{1}{j} where $j \geq 1$ for \reg{1}{0} and $j \geq 2$ for \reg{1}{1}.
If these are the only boundaries, then \reg{1}{0} is an 
(infinite) rectangle and \reg{1}{1} a triangle.
If \reg{1}{1} is bounded above by \reg{1}{0}, then it's a trapezoid.

We note that the lower left corner of the target region can 
have additional boundary edges. 
For simplicity, however, we do not consider those in our area 
approximation.
We show in Section~\ref{subsec:geocon} that the geometric formulas
below are a very good approximation to the area and location of 
the target regions.
Moreover, this permits us to explicitly consider the different 
residual energy factors,
and determine which is the strongest biological signal.

\begin{figure}[t] 
\centering
\includegraphics[width=0.75\linewidth]{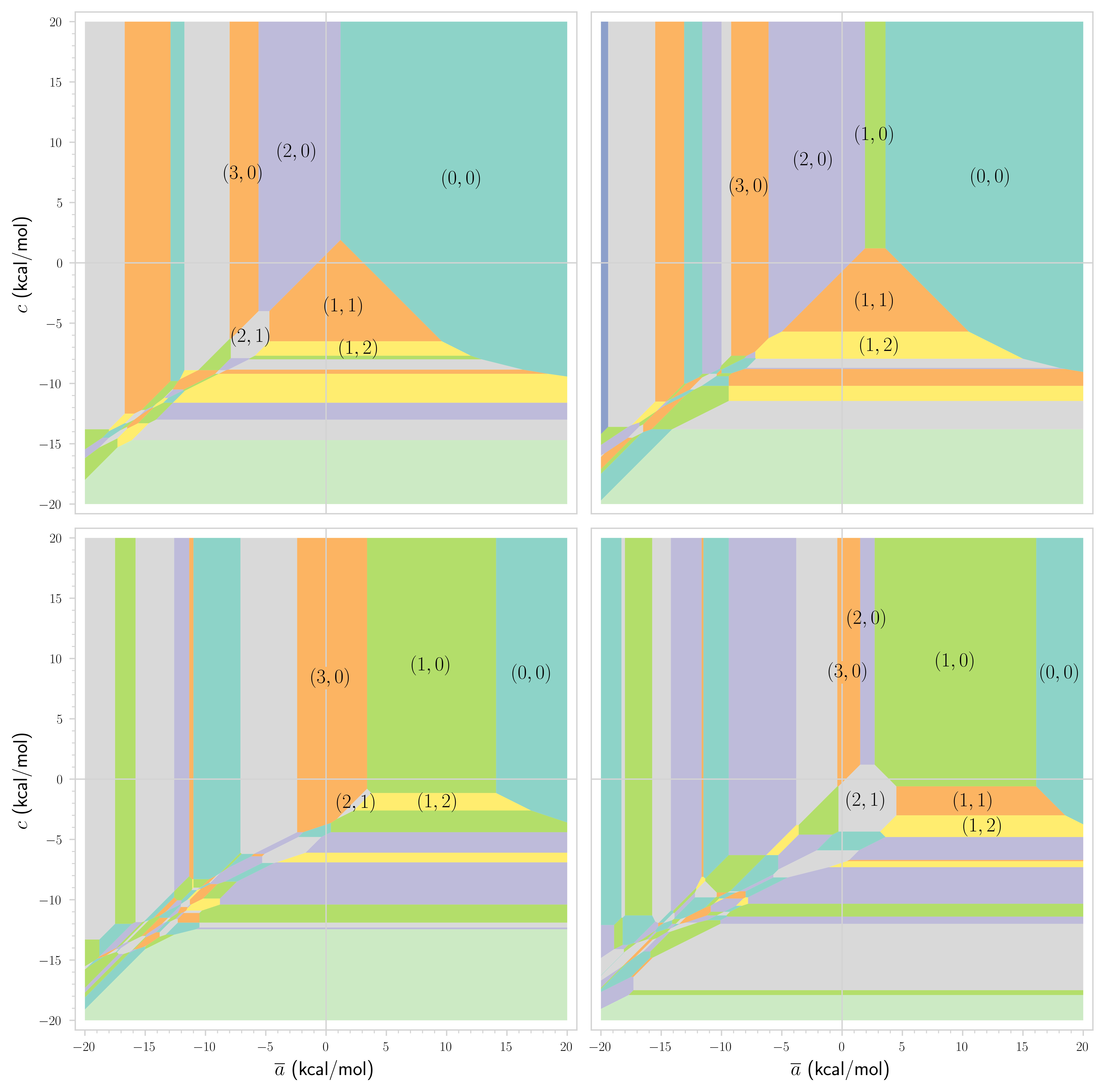}
\caption{
\edit{Four $(\na,0,c,1)$ slices with 
regions colored consistently.
The upper two are tRNA sequences, and exhibit the characteristic
large (orange) triangle/trapezoid with reduced signature $(1,1)$.  
The lower two are 5S rRNA, and have the expected large (green) 
rectangle for the $(1,0)$ region.
Within the two families,
the size and location of their target region is quite consistent.
However, the first tRNA has no $(1,0)$ region, and the second has a
more typical width.
Likewise, the first 5S rRNA has no $(1,1)$ region, and the second
has a more typical height.}
}
\label{fig:target} 
\end{figure}

We first give the geometry approximation for the $(1,0)$ region.
Assume that \reg{1}{0} exists and is bounded only by \reg{0}{0} on the right,
\reg{k}{0} for $k > 1$ on the left,
and \reg{1}{j} for $j \geq 1$ on the bottom.
Then the boundary lines are
\begin{eqnarray} 
\na & = & (\eng{0}{0} - \eng{1}{0}) \\
\na & = & \frac{(-1)(\eng{k}{0} - \eng{1}{0})}{k-1} \\
c & = & \frac{(-1)(\eng{1}{j} - \eng{1}{0})}{j} \text{.}
\end{eqnarray} 
Denote these three values as \rR, \rL, and \rB\ respectively.
Then the width of \reg{0}{0} is $\rR - \rL$.
The region is unbounded above, so we compute the height as 
$c_0 - \rB$ where $c_0 = \max_D\{\rB\} + 0.1$ 
with $D$ being the entire dataset considered.
The midpoint coordinates $(a_m, c_m)$ 
are then the averages of the two bounding line values.
When $k = 2$ and $j = 1$, this reduces to 
$\frac{1}{2}(\eng{0}{0} - \eng{2}{0}, c_0 + \eng{1}{0} - \eng{1}{1})$.

Now, we consider the geometry approximation for the $(1,1)$ region.
The assumptions for \reg{1}{1} are essentially the same.
Suppose further that $\reg{1}{0} = \emptyset$.  
Then the right, left, and bottom boundary lines are, respectively,
\begin{eqnarray} 
c & = & (-1) \na + (\eng{0}{0} - \eng{1}{1}) \\
c & = & (k-1) \na + (\eng{k}{0} - \eng{1}{1}) \\
c & = & \frac{(-1)(\eng{1}{j} - \eng{1}{1})}{(j-1)} \label{eq:11bot}
\end{eqnarray} 
and form a triangle $T$ in the $(\na,c)$-plane with apex
\begin{equation}
(\frac{\eng{0}{0} - \eng{k}{0}}{k}, 
\frac{\eng{k}{0} + (k-1)\eng{0}{0}}{k} - \eng{1}{1})\mbox{.}
\end{equation}

Let $\tL = \frac{1}{k} (\eng{k}{0} - \eng{1}{1})$, 
$\tR = \frac{k-1}{k} (\eng{0}{0} - \eng{1}{1})$,
and $\tB$ be the value from Equation~\ref{eq:11bot}.
Then the height of $T$, denoted $h_T$, is $\tL + \tR - \tB$ and
its area is $\frac{k}{2(k-1)} h_T^2$.

If instead $\reg{1}{0} \neq \emptyset$, then \reg{1}{1} is a trapezoid whose
top boundary is $c = \eng{1}{0} - \eng{1}{1}$.
Denote this value as \tT.
The trapezoid's area is $\frac{k}{2(k-1)} (h_T^2 - h_t^2)$ for 
\begin{equation}
h_t = 
\frac{1}{k}(\eng{k}{0} - \eng{1}{0}) + \frac{k-1}{k}(\eng{0}{0} - \eng{1}{0})
= \frac{k-1}{k}(\rR - \rL)\mbox{.}
\end{equation}
Hence, the area of \reg{1}{1} is a function of up to five weighted 
energy differences.

We compute the midpoint coordinates $(a_m, c_m)$ as the average of the vertices.
Let $a_0 = \frac{\eng{0}{0} - \eng{k}{0}}{k}$.
Then $a_m = a_0 + \frac{(k-2)}{3(k-1)} h_T$ if 
$\reg{1}{0} = \emptyset$
and $a_0 + \frac{(k-2)}{4(k-1)}(h_T + h_t)$ otherwise.
If $k = 2$, then this again reduces to
$\frac{1}{2}(\eng{0}{0} - \eng{k}{0})$.  
Likewise, 
$c_m = \frac{1}{3}(\tL + \tR + 2 \tB)$ in the first case
and $\frac{1}{2}(\tT + \tB)$ in the second.

Finally, we note that to generalize these equations to arbitrary $b_0 \neq 0$, 
the (weighted) differences in the $y$ terms, 
that is in the total number of unpaired bases over all junctions
between neighboring configurations,
would be added to the residual energy differences.

\subsection{\edit{Datasets}} \label{subsec:data}

\begin{table}[b!]
\centering
\begin{tabular}{l|ll|ll|ll} \hline 
\multicolumn{7}{|c|}{tRNA (min/med/max)} \\ \hline
eng & CP000493 & -48.00 & AY934351 & -32.30 & AY934184 & -23.00 \\
area & AY934254 & 0 & CP000471 & 59.34 & AP006618 & 160.70\\
f1 & AY934387\_a & 0.2439 & BX569691 & 0.7727 & AE000657 & 1\\
nntm & AE013169* & 0.0513 & CP000473* & 0.5275 & AE008623 & 1 \\
gc & X04465 & 0.4054 & DQ396875 & 0.5890 & AE014184 & 0.7297 \\
diff & BA000021* & -0.1027 & DQ093144* & 0.0357 & CP000143* & 0.7497 \\ \hline
\multicolumn{7}{|c|}{5S rRNA (min/med/max)} \\ \hline
eng & M21086 & -102.00 & X13037 & -51.60 & X06094 & -43.00 \\
area & Z75742 & 0 & K02343 & 71.33 & X07545 & 191.17 \\
f1 & V00647 & 0.1579 & M36188* & 0.7671 & X02627 & 0.9167 \\
nntm & M24954* & 0.1714 & M26976 & 0.7294 & AE009942* & 0.8941 \\
gc & U39694 & 0.4322 & X13035 & 0.5750 & X07692* & 0.7165 \\
diff & Z33604 & -0.2724 & X06996 & 0.0001 & X72588 & 0.6788 \\ \hline
\end{tabular}
\caption{Accession number and characteristic value for the bio sequences 
used as the basis for the shuf subsets.  
\edit{Six sequence characteristics were considered.
Three were properties of the target region:
the residual free energy $w$ from the branching signature (\emph{eng}), 
the total area (\emph{area}),
and the F1-accuracy of the corresponding secondary structure (\emph{f1}).
The others were the 
MFE prediction accuracy (\emph{nntm}) and GC content (\emph{gc})
originally used for sequence selection,
as well as the 
accuracy improvement over that MFE prediction for the 
target region structure (\emph{diff}).}
*~denotes the replacement for a repeated sequence. 
\label{tab:shuf}}
\end{table}

\revstart
Results are based on analyzing a number of different sets of RNA sequences.
They are broadly grouped into three categories:
\begin{enumerate}
\item the 100 biological ones used in previous studies~\cite{polystats, bnb},
\item a total of 24,000 random counterparts for those 100 sequences, and
\item the $\sim$4,000 biological sequences in the Archive II 
benchmarking dataset~\cite{sloma2016exact, mathews-19}.
\end{enumerate}

The target region geometry analysis uses the first two categories of
sequences.
The first dataset, denoted here as the \textit{bio} sequences, 
are the 50 tRNA and 50 5S rRNA whose full branching polytopes 
were previously computed and extensively analyzed~\cite{polystats, bnb}.
These sequences, and their associated secondary structures,
were originally selected from the 
Comparative RNA Web (CRW) Site~\cite{cannone-etal-02} so that
their T99 MFE prediction accuracies spanned the full range.
\revend
Differences in GC content were used to ensure that sequences with 
similar MFE values generate a diverse set.
\revstart
These 100 sequences, along with comparative secondary structures
and some associated details, can be accessed online
via 
\url{https://github.com/gtDMMB/Datasets}.
\revend
To assess biological significance,
their target region properties were compared against 
random sequences.

\revstart
The random counterparts for these bio sequences were generated 
in three different ways.
The primary comparison set, denoted as the \textit{shuf} data,
was generated using the 
\texttt{ushuffle} program~\cite{USHUFFLE}.
These shuffles are organized into 6 subsets
listed in Table~\ref{tab:shuf}.
Each subset is defined by a relevant sequence characteristic,
described in the table caption.
\revend
For each characteristic, the sequences from each family 
with the minimum, median, and maximum values were used.
In the event of a repeat, the next closest sequence was chosen.

\edit{For each of the 18 sequences chosen per family,} 
500 distinct shuffles were generated
preserving dinucleotides/2mers, i.e.\ length two substrings.
The number was based on the formula 
$n = (z \times s/e)^2$ for 
the sample size necessary to estimate the mean of a distribution,
where $s$ is the initial standard deviation, $z$ is the z-score 
corresponding to desired confidence level, and $e$ is the margin of error.  
Based on prelimary data, now included in the mers dataset described below,
it was determined that 500 samples would be sufficient 
for $z = 1.96$ (i.e.\ 95\%) and e = 0.3.

To better understand the effect of sequence length, a uniformly 
random dataset (\textit{unif}) was considered where the probabilty 
of each nucleotide is 0.25.
A set of 1500 sequences was generated with length 
74 for tRNA comparisons, and 121 for the 5S rRNA.

We also consider the effect of $k$mer length by computing a 
dataset (\textit{mers}) of 1500 sequences split evenly 
between 2mer, 3mer, and 4mer preserving shuffles of an 
\textit{Oryza nivara} tRNA sequence and also of the
5S rRNA from \textit{Escherichia coli}.

\edit{
As a consequence of the geometric analysis, we proposed 
considering alternative predictions 
generated by varying the branching parameters.
To validate the potential accuracy improvements from this approach, 
it was tested on the Archive II 
benchmarking dataset~\cite{sloma2016exact, mathews-19}.
This sequence collection was assembled by the Mathews Lab (U~Rochester)
and is available\footnote{\edit{\url{https://rna.urmc.rochester.edu/publications.html}}} online.
It consists of sequences from 10 families:
557 tRNA, 
1283 5s rRNA,
928 signal recognition particle (SRP) RNA,
454 RNase~P,
462 tmRNA,
4 domains each from 22 small subunit ribosomal RNA (16S dom),
98 group I self-splicing introns (grp~I~intr),
37 telomerase RNA,
6 domains each from 5 large subunit ribosomal RNA (23S dom),
and 11 group II self-splicing introns (grp~II~intr).
The 3 tRNA and 33 5S rRNA which happened to coincide with our training
set were removed to ensure independence of the testing dataset.
}


\section{Results} \label{sec:data}

The data analysis was evenly split between tRNA and 5S rRNA sequences.
For each family, we considered properties of 50 biological sequences 
against a large set of random comparisons.
Most were generated under permutations which preserve the 
dinucleotide/2mer distribution.
Additional randomized comparisons were made to test for dependencies 
on sequence length and choice of $k$mer.
In total, normal fan slices for 
12,000 random sequences were analyzed for each family,
grouped into
the shuf (75\%), unif (12.5\%), and mers (12.5\%)
datasets \edit{described in Section~\ref{subsec:data}}.
This was feasible using the new reduced polytope algorithm,
but still required more than 1,500 hours of compute time.

\subsection{Target region geometry}

\begin{figure}
\includegraphics[width = \textwidth]{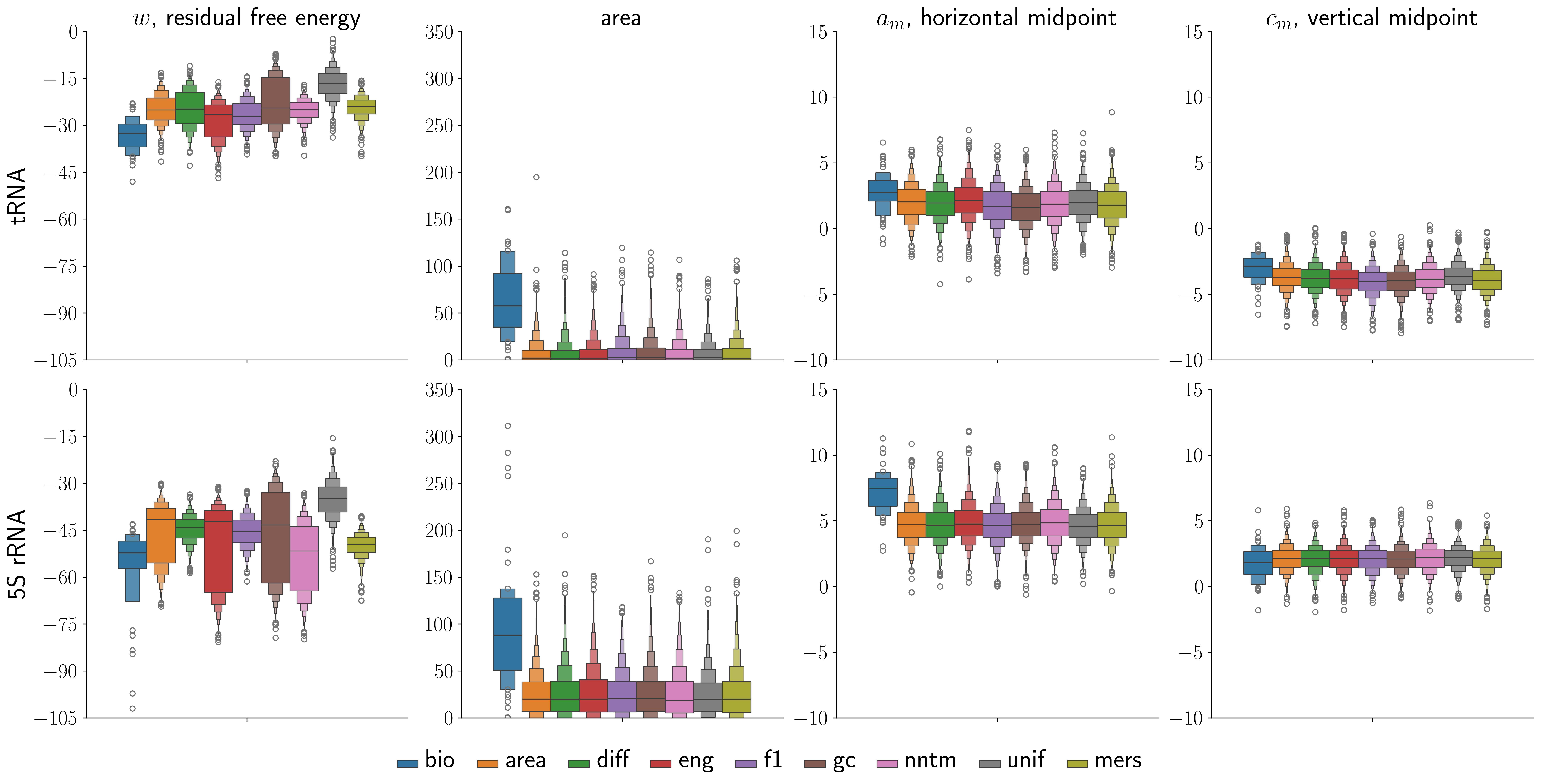}
\caption{Boxen plots~\cite{letter-value-plot} 
for the residual free energy ($w$), area, and 
center coordinates $(a_m, c_m)$ 
in the $(\na,c)$-slice, as computed by \texttt{SageMath},
for the tRNA and 5S rRNA target regions of
\reg{1}{1} and \reg{1}{0} respectively.
The central box displays the median and interquartile range, as 
in a standard boxplot.
Subsequent boxes cover a similar range to the boxplot whiskers 
but display more information.
The width of the box is exponentially correlated with the percentile
shown.
\label{fig:sage_prop}}
\end{figure}

From Figure~\ref{fig:target} \edit{in Section~\ref{subsec:geom}}, 
we see that the size and location 
of the target regions can be similar within a family but 
quite different between them.
Figure~\ref{fig:sage_prop} shows that this consistency holds 
for the randomized data, while also highlighting the very large
differences in region area between the bio sequences and their 
random counterparts.

To begin, we see different $w$ distributions
between sequences of similar lengths, 
i.e. $\sim$74 nt for tRNA and $\sim$121 nt for 5S rRNA.
This is especially true when the randomized comparisons are based
on a characteristic, like GC content, which is known to be
correlated with thermodynamic stability.

Nonetheless, these differences resolve into consistent geometric 
properties.
Hence, comparing the medians of the 8 random subsets considered,
the average (with standard deviation) for the \reg{1}{1} 
properties is 2.19 (0.40) for the area
with horizontal and vertical center coordinates of 
1.89 (0.17) and -3.84 (0.12).
The corresponding ones for \reg{1}{0} are an area of
19.99 (0.66) centered on 4.70 (0.08) horizontally and
2.15 (0.04) vertically. 
\edit{The upper bound imposed on \reg{1}{0} is $c_0 = 6.6$ since
$\max\{\rB\} = 6.5$ over the entire dataset.}

In contrast, the median bio area is 57.67 for tRNA \reg{1}{1} and 
88.115 for 5S rRNA \reg{1}{0},
and the respective center coordinates are (2.75, -2.86) and 
(7.5, 1.85).
Hence, the biological sequences have significantly different 
areas overall,
irrespective of the choice of comparison subset.
The tRNA centers are typically located  
slightly to the right and somewhat above their randomized counterparts. 
In contrast, the 5S rRNA ones are shifted 
noticably to the right, but barely down.
To understand how, and therefore why, this is happening, we 
use the geometric approximations introduced 
in \edit{Section~\ref{subsec:geom}}, conditioned on region existence.

\subsection{Existence and boundary composition}
\label{subsec:exist}

Of the 50 bio sequences for each family, 98\% have a target region 
in the $(a,0,c,1)$-plane.
When the other region also exists, they form a boundary so we 
consider these counts too.

We compare these proportions to the 9000 shuf sequences per 
family using a binomial test.
More precisely, the existence of a particular region 
is a random variable with estimated probabilty $\hat{p}$.
The p-value, denoted $p_v$, for the number of $k$ successes in $n = 50$ trials 
is reported in Table~\ref{tab:exist} under a two-tailed test.
Significance for this, and for each of the two sets of boundaries 
considered below, was assessed at $\alpha = 0.05$ under a 
Bonferroni correction.

\begin{table}
\centering
\begin{tabular}{c|*{4}{c}|*{4}{c}}
family & reg & $\hat{p}$ & $k$ & $p_v$ & reg & $\hat{p}$ & $k$ & $p_v$ \\ \hline
tRNA & 1,0 & 0.9088 & 45 & 0.8042 & 1,1 & 0.6239 & 49 & \textbf{0.0000}\\
5S rRNA & 1,0 & 0.8572 & 49 & \textbf{0.0075} & 1,1 & 0.5459 & 16 & \textbf{0.0016}\\
\hline
\end{tabular}
\caption{P-values for \reg{1}{0} and \reg{1}{1} counts
for the bio sequences compared to their shuf counterparts
under (exact) binomial test.
Bold denotes significance at $\alpha = 0.05$ under a Bonferroni correction.
\label{tab:exist}}
\end{table}

The existence of an $\reg{1}{0}$ is common, but the higher number 
for 5S rRNA is significant.
As we will see, the stability of this region affects the existence of 
an $\reg{1}{1}$, 
so there are significantly fewer than expected.  
The dual is not observed for tRNA since \reg{1}{0} is an infinite region.
Hence, although the tRNA \reg{1}{1} p-value is extreme ($2.877 \times 10^{-9}$),
the number of $\reg{1}{0}$ is typical.

We note that the estimated probabilities between the two families 
are different.
We confirmed (under an ANOVA followed by a Tukey HSD post-hoc
test against values for the unif dataset) that this is statistically 
significant and consistent with the increase in sequence length.

The right boundary of the target regions is always \reg{0}{0}, 
an infinite region which exists for every sequence.
Moreover,
there is no significant effect of the target region stability on its 
left boundary composition, 
as assessed by comparable binomial tests conditioned on the target.
All tRNA \reg{1}{1} and 79.6\% of 5S \reg{1}{0} are bounded by \reg{2}{0} 
on the left,
but this is not unusual 
($\hat{p}$, $p_v$ = 0.9421, 0.1164 and 0.8394, 0.4341).
Nor is that 89.8\% of tRNA \reg{1}{1} are bounded above by \reg{1}{0}
($\hat{p}$, $p_v$ = 0.8937, 0.8980).
However, the bottom boundaries are both significantly different.

Of the 49 tRNA \reg{1}{1}, 42 are bounded below by \reg{1}{2} and the rest
by \reg{1}{3}
($\hat{p}$, $p_v$ = 0.5777, 0.0000 and 0.3147, 0.0084).
Of the 5S rRNA, 16 are bounded below by \reg{1}{1} and 23 more by \reg{1}{2}
($\hat{p}$, $p_v$ = 0.5331, 0.0040 and 0.2666, 0.0030).
The lower boundaries for the corresponding shuffles range beyond \reg{1}{6}
(for one \reg{1}{1} and 14 \reg{1}{0}).

We note that if the number which are bounded by the first two regions
is instead considered, then the tRNA counts are still significant 
($\hat{p}$, $p_v$ = 0.8924, 0.0087) whereas the 5S rRNA are not
($\hat{p}$, $p_v$ = 0.7997, 1).
Hence the \reg{1}{1} lower boundary for biological tRNA sequences 
is  more likely to have a lower degree of branching than expected, 
whereas this is not the case for \reg{1}{0} and 5S rRNA.
However, the balance between \reg{1}{1} and \reg{1}{2} is very unusual
for 5S rRNA, most likely due to its \reg{1}{0} stability.

\subsection{Residual energies and their differences}

\begin{figure}
\includegraphics[width = \textwidth]{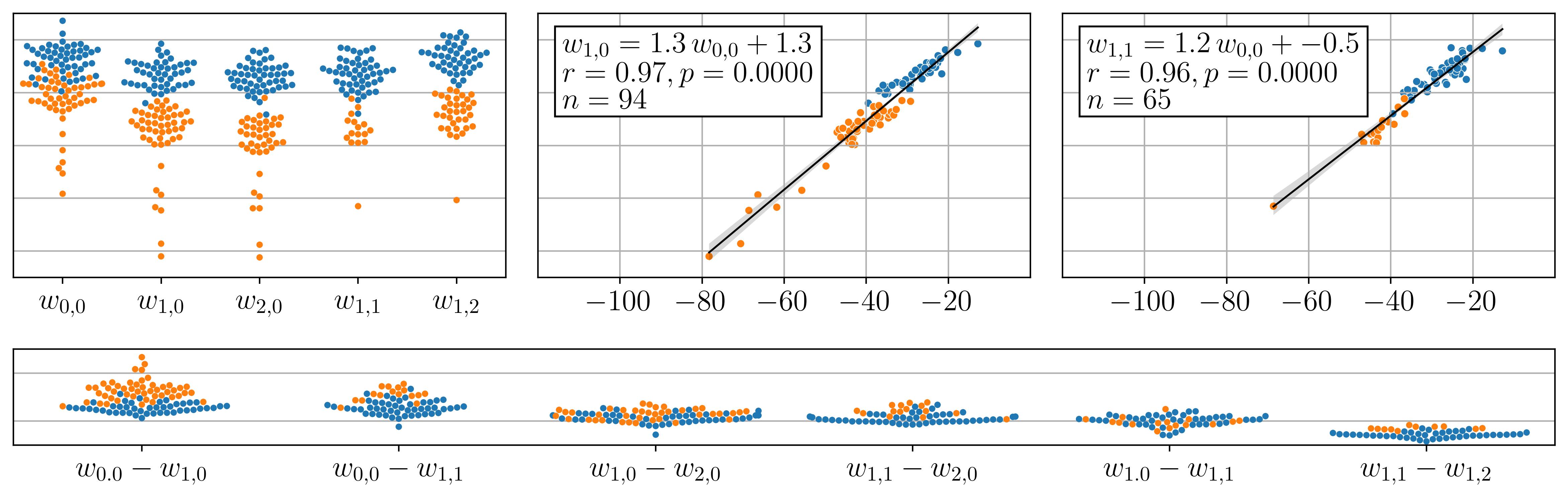}
\caption{Residual energies for the biological tRNA (blue) and 
5S rRNA (orange) sequences.
Horizontal (grey) lines are at 20 kcal/mol increments.
The best line fit was calculated with \text{scipy.stats.linregress}
with Pearson correlation coefficient $r$ and 
p-value $p$ under the Wald Test.
\label{fig:bioscat}}
\end{figure}

The geometry approximations 
derived in Section~\ref{subsec:character}
are determined by the signatures of the surrounding regions. 
The slope of the boundary lines is a function of the reduced 
signature variables, but the location (i.e.\ the axes intercepts)
depends on differences in the residual energy $w$.

As illustrated in Figure~\ref{fig:bioscat}, 
there is a range of $w$ terms for individual bio regions.
For example, the average and standard deviation for \eng{0}{0}
is (-27.38, 5.37) for tRNA and (-43.13, 10.19) for 5S rRNA yet
the difference with their target regions' residenal energy is 
(5.78, 2.94) and (12.93, 4.19) respectively.
This is due to the very strong correlation between the $w$ terms
in adjacent regions.
Figure~\ref{fig:bioscat} illustrates two such correlations for 
the most dispersed of the six differences pictured;
the other correlations are even higher.

We tested whether the strength of these correlations is 
significant by generating 10,000 random subsets of size 100 
(split evenly between the two families) from the shuf dataset 
for each of the six residual energy differences pictured.
For each subset, the Pearson correlation coefficient $r$ was
computed, generating an average and standard deviation for the 
entire dataset against which the bio $r$ values were compared
using a z-score.
Five out of the six z-scores were negative, and three 
significantly so (at $\alpha = 0.05$ under a Bonferroni correction)
whereas the one positive one, for $w_{1,0}$ versus $w_{2,0}$, was
not ($p_v = 0.2668$).
Hence, the existence of a strong correlation between neighboring 
residual energy values is not biologically meaningful.

However, the differences in those energy values can be  
significant.
Recall that the left and bottom boundary regions of the targets
are denoted \reg{k}{0} and \reg{1}{j},
and that the right boundary is always \reg{0}{0}.
As discussed in Section~\ref{subsec:exist}, 
the $k$ and $j$ values are quite consistent for the bio sequences 
but can be more variable for the shuf ones.
This is factored into the geometric approximations 
by the weighting on the residual energy differences.

Table~\ref{tab:diff} summarizes the differences between the 
bio and shuf distributions.
We see that the 5S rRNA \reg{1}{0} lower and left boundaries 
are not significantly different from random, 
but all the others are strongly so.

To further quantify this, the z-score of each bio sequence was 
computed, and summary statistics are also reported. 
An ANOVA, followed by a Tukey HSD post-hoc test, on
the tRNA z-scores and on the 5S ones confirms that the 
difference among boundaries is significant (both $p_v = 0.0000$) but
that the differences between \reg{k}{0} and \reg{1}{j} 
in both families are not ($p_v = 0.8474$ and 0.9744).
Moreover, neither are the differences between the top and right
boundaries for tRNA ($p_v = 0.3123$).

Hence, we conclude that essentially all the geometric difference 
for 5S rRNA is due to the relative stability of the (1,0) region
over the (0,0) one.
While this boundary, along with the top one, is also a major factor 
for tRNA, the left and bottom boundaries are a significant contribution
as well.

\begin{table}
\begin{tabular}{c|*{8}{c}}
family & boundary & b-avg & b-std & s-avg & s-std & p-value & z-avg & z-std \\ \hline
tRNA & \tL & -0.5765 & 0.9384 & -1.4254 & 0.8704 & \textbf{0.0000} & 0.9752 & 1.0781 \\
 & \tR & 2.7827 & 1.4532 &  0.5449 & 1.3337 & \textbf{0.0000} & 1.6778 & 1.0896 \\
 & \tB & -5.8082 & 1.1974 & -4.8590 & 1.2198 & \textbf{0.0000} & 0.7781 &  0.9817 \\ 
 & \tT & 0.2932 & 2.3240 & -2.7826 & 1.4595 & \textbf{0.0000} & 2.1074 &  1.5923\\ \hline
5S rRNA & \rL & 2.6497 & 2.1185 & 3.1197 & 1.6437 & 0.0465 & -0.2860 & 1.2889 \\
 & \rR & 12.9286 & 4.2380 & 6.6163 & 2.1807 & \textbf{0.0000} & 2.8946 & 1.9434 \\
 & \rB & -2.2081 & 2.1011 & -1.8930 & 1.4544 & 0.1319 & -0.2167 &  1.4447\\
 \hline
\end{tabular} 
\caption{
Energy difference comparison for target region boundaries.
P-values were computed under an unpaired Student t-test
on the bio (b) and shuf (s) distributions, and bolded 
when significant at $\alpha = 0.05$ under a Bonferroni correction.
The bio z-score summary statistics are also given.
\label{tab:diff}}
\end{table}

\subsection{Geometric consequences}\label{subsec:geocon}

First, 
we consider how well the geometric formula approximate the area of the 
target regions.
Recall that any difference is caused by additional boundaries in 
the lower left corner.
These decrease the actual area, and consequently shift the center coordinates.
Relatively few of the \reg{1}{0} are affected in this way.
Of the 10,314 considered, 72.66\% 	
have an absolute error in the area approximation $< 0.1$
and only 9.5\%				
are $> 1$.
Only 7 are $> 10$, and the maximum is 19.0638.
The formula are also a very good approximation for the tRNA \reg{1}{1},
although their boundaries can be more complex.
More than half (52\%) of the 7570 considered have an absolute error $< 0.1$,
with only 22.5\% being $> 1$,
53 being $> 10$ and a maximum of 26.2825.
Hence, the conclusions drawn from the residual energy comparisons
just considered are determining factors 
in the large differences in area observed.

In particular, the size of the \reg{1}{0} target region for 5S rRNA 
is driven by its stabilty relative to \reg{0}{0}.
This results in a much larger width, but no  meaningful
difference in height.
As a consequence, the center is shifted horizontally, 
but not vertically.
Yet, the existence of an \reg{1}{1} was found to be  significantly 
different for the bio sequences.

To gain some insight,
the \rB\ values were considered only for those bio and shuf sequences
with $j > 1$.
These were compared with the 5S rRNA \tB\ values for bio and shuf 
datasets.
An ANOVA followed by a Tukey HSD post-hoc test found that  
the two shuf distributions are significantly different
($p_v = 0.0000$) from each other 
but the two bio ones are not ($p_v = 0.9978$).
Hence, for the bio sequences but not the shuf ones, 
the \reg{1}{0} bottom when $j > 1$ is very close to where 
the \reg{1}{1} bottom is likely to have been.
Moreover, the bio \rB\ value when $j = 1$ is significantly 
higher (-0.4313 versus -3.0695, $p_v = 0.0000$) than not.
Thus, more 5S rRNA do not have a \reg{1}{1} than expected.

In contrast, all four boundaries contribute to the size 
of the tRNA \reg{1}{1}. 
As a consequence the effect on the center location is less pronounced,
although it is shifted right and up, consistent with 
the greater contributions of \tR\ and \tT\ than \tL\ and \tB.
Due to the slope of the boundary lines, the width is closely 
related to the ``full'' triangle height $h_T = \tR + \tL - \tB$.
We find, though, that the distribution of $h_t$ is quite similar
to the random ones. 
In other words, the stability of \reg{1}{1} relative to \reg{1}{0}
results in essentially the same amount being deducted from 
the full triangle area as for the random sequences when $\reg{1}{0}$
exists.
This is consistent with the tRNA \reg{1}{0}
being like their random counterparts.

\subsection{Prediction} \label{subsec:prediction}

\revstart

Previous work~\cite{polystats} demonstrated that understanding the polytope geometry can help improve predictions under NNTM optimization. The ad-hoc parameters determined by analyzing intersections of regions of interest for our 100 tRNA and 5S rRNA sequences showed improved accuracy across both families. However, since we also considered families separately, we observed that the combined accuracy was well below what could be achieved with family-specific parameters.

Subsequent development of a branch-and-bound algorithm~\cite{bnb} to assess  potential mprovements across families confirmed that the ad-hoc parameters performed as well as it could be expected. Both ad-hoc and branch-and-bound parameters led to improved predictions on a larger testing set of tRNA and 5S rRNA sequences from the Archive II dataset~\cite{sloma2016exact, mathews-19}, with ad-hoc parameters outperforming, likely due to reduced overfitting to the training data.

Here, we adopt the same approach used in~\cite{polystats} to identify the ad-hoc parameters. We construct the graph of pairwise intersections of target regions across the entire training set of 100 sequences, as well as each family separately. In this graph, two regions are connected if they intersect. We then identify maximal cliques and take cumulative intersections over those cliques. The aim is to find a single parameter combination, but we consider each family separately in order to develop understanding how the accuracy of each is affected by inclusion of the other in the training set. As suggested by the example in Figure~\ref{fig:target}, pairwise intersections across families are easily found. The problem is that their locations vary, making finding cumulative intersections of multiple regions more challenging. 

Each family was found to have only a few maximal cliques, each with a nonempty intersection. Moreover, for each family we identified a large ``core'' which belongs to all maximal cliques containing over 80\% of the sequences. As illustrated in Figure~\ref{fig:regions_plot}, the core intersections of tRNA and 5S rRNA are distinctly separated, even exhibiting opposite preferences for the sign of the $c$-parameter.

\begin{figure}[]
\centering     \includegraphics[width=\linewidth]{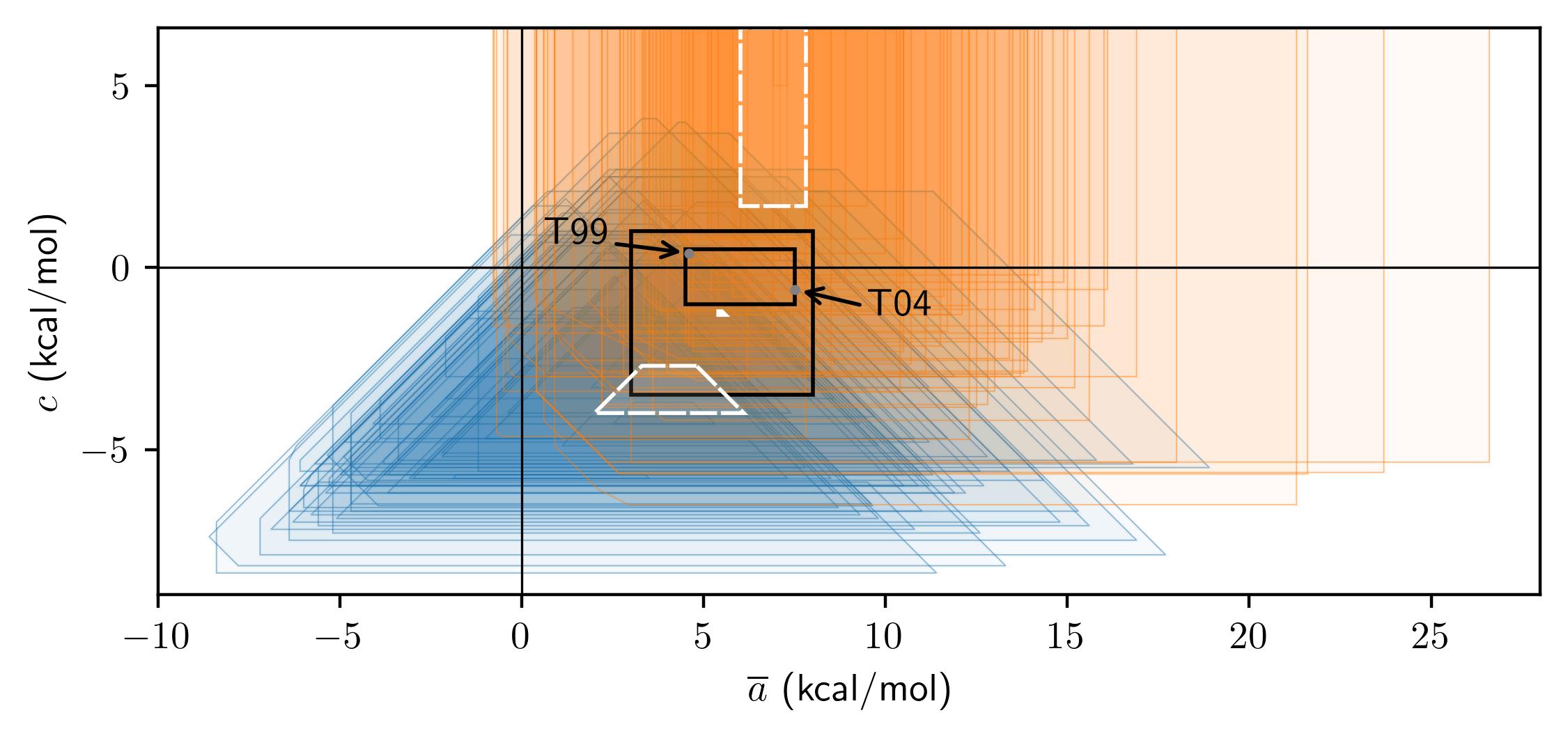}
\caption{\edit{The target regions for tRNA (blue trapezoids) are overlayed with those for 5S rRNA (orange rectangles). Regions with white boundaries represent the cumulative intersections of the target regions for a large core identified within each family, as well as the most balanced clique across families -- 42  tRNA (bottom), 44 5S rRNA (top) and  31 tRNA and 35 5S rRNA (middle). The search rectangles Trec (smaller) and Grec (larger) are shown in black, with the unskewed versions of the T99 and T04 parameters marked inside. The precise region locations are as follows. tRNA : vertices $(4.8, -2.7)$, $(3.3, -2.7)$, $(2.0, -4.0)$, $(6.1, -4.0)$, 5S rRNA: $6 \leq \na \leq 7.8$, $c \geq 1.7$, tRNA/5S rRNA: vertices $(5.4, -1.3)$,$(5.4, -1.2)$,$(5.5, -1.2)$,$(5.6, -1.3)$, Trec: $4.5 \leq \na \leq 7.5$, $-1 \leq c \leq 0.5$, Grec: $3 \leq \na \leq 8$,  $-3.5 \leq c \leq 1$.}}
\label{fig:regions_plot}
\end{figure}

In contrast, the entire set’s intersection graph contained numerous maximal cliques of varying sizes, with most cumulative intersections being empty. The largest subsets with nonempty intersections include only about two-thirds of the sequences and some of them are skewed toward tRNA.  The cumulative intersection for the most balanced clique between families (Figure~\ref{fig:regions_plot}) is small and disjoint from both core intersections.  Therefore, while each family’s prediction can be improved with different parameter sets, using a single parameter set results in lower accuracy than family-specific parameters.

Given the evidence that the best linear approximation of the branching term varies between families, we suggest considering optimal structures under multiple linear approximations as a basis for predictions, rather than choosing a single parameter set. We test this idea on the Archive II dataset, sampling parameters from two rectangles in 0.5 increments. The first rectangle we consider (Trec) is the smallest rectangle containing the T99 and T04 (unskewed) parameters, while the second rectangle (Grec) was selected based on the geometry of the 100 target regions in our training set. Bounds for the $\na$ values were chosen based on the ranges of the $\na$ values of the centers of the target regions in both families and is not much bigger than the Trec $\na$ range. Since the 5S RNA regions are infinite vertical strips, with the tRNA trapezoidal regions being generally lower than them, the $c$ range was chosen based on the the top decile of the bottom boundaries for 5S RNA and the bottom decile of the top boundaries for tRNA,  rounded to the nearest 0.5 increment and expanded to include an extra row.

Table ~\ref{tab:predictionrecomputed} shows significant improvements in prediction accuracy when, instead of using only T04, multiple structures optimal under branching parameters from Trec are considered. We report accuracy for each family separately because the test data is highly unbalanced. We observe that the number of structures generated generally remains within single digits for the sequence lengths in our test set. Further improvements were observed when sampling from Grec (significant under a Bonferroni correction) for all families except the group II intron, likely due to sample size. Improvements for tRNA and 5S rRNA sequences from the Archive II data indicate that we are not overfitting to our training sequences when considering the larger rectangle. 

\begin{table}[]
\revstart
\centering
\begin{tabular}{c|cc|c|ccc|ccc}
\multicolumn{3}{c|}{sequences} & T04 & \multicolumn{3}{c|}{Trec} & \multicolumn{3}{c}{Grec} \\ \hline
family & \makecell{seq \\ len} & \makecell{set \\ size} & acc  & acc & \makecell{p-value\\ vs T04} & \makecell{structures \\ (avg $\pm$ std)}  &  acc & \makecell{p-value  \\ vs Trec} & \makecell{structures \\ (avg $\pm$ std)}  \\ \hline
tRNA* & 77 & 555 & 0.53 & 0.68 & \textbf{0.0000} & $2 \pm 1$ & 0.79 & \textbf{0.0000} & $3 \pm 1$   \\
5S rRNA* & 119 & 1250 & 0.62 & 0.66 & \textbf{0.0000} & $2 \pm 1$ & 0.68 & \textbf{0.0000} & $4 \pm 2$
\\ 
SRP  & 184 & 928 & 0.67 & 0.69 & \textbf{0.0000} & $2 \pm 1$ & 0.70 & \textbf{0.0000} & $5 \pm 4$  \\ 
RNaseP & 332 & 454 & 0.52 & 0.61 & \textbf{0.0000} & $5 \pm 2$ & 0.66 & \textbf{0.0000} & $17 \pm 5$   \\ 
tmRNA & 366 & 462 & 0.40 & 0.49 & \textbf{0.0000} & $6 \pm 2$ & 0.53 & \textbf{0.0000} & $17 \pm 5$ \\ 
16S dom & 378 & 88 & 0.53 & 0.59 & \textbf{0.0000} & $4 \pm 3$ & 0.61 & \textbf{0.0001} & $15 \pm 9$ \\ 
grp I intr & 426 & 98 & 0.51 & 0.57 & \textbf{0.0000} & $6 \pm 3$ & 0.61 & \textbf{0.0000} & $20 \pm 4$\\ 
telomerase & 445 & 37 & 0.49 & 0.55 & \textbf{0.0000} & $5 \pm 2$ & 0.58 & \textbf{0.0013} & $16 \pm 4$ \\ 
23S dom & 461 & 30 & 0.66 & 0.73 & \textbf{0.0008} & $6 \pm 3$ & 0.76 & \textbf{0.0032} & $19 \pm 9$ \\
grp II intr & 717 & 11 & 0.28 & 0.34 & \textbf{0.0003} & $11 \pm 3$ & 0.37 & 0.0417 & $37 \pm 8$  \\
\hline
\end{tabular} 
\caption{\edit{Per family prediction accuracy on the Archive II data set (*the tRNA and 5S rRNA sequences which happened to coincide with the training set were excluded from the test set). The first three columns describe the composition of the data set. Comparisons are given between the average accuracy (\emph{acc}) of the optimal structure under the T04 parameters (measured as the F1-score) and the most accurate structure found within the Trec rectangle, as well as the best structures within the extended Grec rectangle. The p-value from a paired t-test is given for each family and bolded when significant at $\alpha = 0.05$ under a Bonferroni correction. The number of different structures seems to grow linearly with sequence length.}}
\label{tab:predictionrecomputed}
\revend
\end{table}

Based on the existence of a much better prediction in a not much bigger set of structures, this is a promising new way for sampling structures for prediction. We discuss possible directions in which this method can be developed to generate a prediction in the Conclusions section. 

\revend


\section{Discussion} \label{sec:discuss}

\revstart

Previous analysis of branching parameters which improve the MFE prediction indicated that different parameter combinations favor different families. Here we focused on the two families which were analyzed previously -- tRNA and 5S rRNA -- and on the regions made of the parameters which yield the same optimal structure for each RNA sequence, rather than parameter combinations. 

When compared to their dinucleotide shuffles, the existence of the target branching region is observed more frequently for the biological sequences. We find statistically that the region existence is partly affected by sequence length and it would be interesting to find a mathematical explanation for this fact. However, we find that this does not explain the differences for the biological sequences -- they have the target region more frequently than expected for their length. 

To analyze the geometry of the target regions for these families, it is useful to perform a change of variables and consider only the excess instead of the total amount of branching in the secondary structure. While this formulation cannot be used directly in the optimization since it is not amenable to  subdivision, it eliminates the 1/3 skew in the original parameter space and makes the regions more symmetric.  The transformation is invertible, so any conclusion obtained about the parameters in the unskewed space translates directly to an analogous one for the original parametrization. 

We give a geometric description of the target branching regions for tRNA and 5S rRNA and we show that they are well approximated by a triangle or trapezoid (for tRNA) and an infinite  vertical strip (for 5S rRNA). The formulas we derive imply that the region location and boundaries depend on the relative difference between the residual energies of the target branching structure and only 3 or 4 other branching patterns. 

While the geometry of the target region is intrinsic to the NNTM optimization, its size is due to the biological signal. It is known that functional RNAs have lower folding energy than random RNAs of the same length and dinucleotide frequency~\cite{clote2005structural}. However, the residual energies for different regions are correlated, so this fact does not imply difference in the regions of different biological and random sequences. Interestingly, the statistical analysis of the residual differences (due to the stackings and other loops) of the target and neighboring structures for tRNA and 5S rRNA showed that the distributions are similar for each family and are a new type of biological signal. This explains why the target regions within each family have a large overlap and why we can improve the prediction for each family separately.

We also showed that while the overlap of the target regions across families is smaller than within each family, the location we observe for tRNA and 5S rRNA indicates an area which might be useful for other families of sequences, if one is open to considering other possible optimal structures by varying the branching parameters.  With the rectangle search, the prediction improved for all of the families in the testing set, while the set of structures generated wasn't very large and appeared to grow linearly with sequence length. It might be of interest to be able to consider only the proposed search rectangle here or to further ``zoom into'' a related area.

\revend


\section{Conclusions} \label{sec:conclusion}

\revstart

This work answers open questions while raising new ones. We now understand why multiple parameter optimization approaches have concluded that it is impossible to  improve predictions for both tRNA and 5S rRNA simultaneously~\cite{ward2017advanced, polystats}. Significant improvements for both families are achievable, but only for different parameters, and as previous work has shown~\cite{bnb}, this disparity is not due to either family being an outlier. Some of the other families in the Archive II set show improved predictions with tRNA-like parameters, while others benefit more from 5S rRNA-like parameters.

As we observed, substantial gains in accuracy can be achieved by considering multiple branching configurations. This approach suggests a new direction to explore for improving NNTM optimization. It also raises a key question: how should we handle the structures generated? Multiple approaches such as summarizing the data~\cite{ding2001statistical, rogers2014profiling}, re-evaluating under more sophisticated energy models~\cite{liu2011fluorescence, zuber2024estimating}, or incorporating chemical footprinting data~\cite{ge2015computational} have been successfully applied in other prediction methods that rely on large samples,  making them natural next steps to explore. 

To generate the data, we developed an algorithm for generating a fixed $b = b_0, d=1$ slice of the normal fan of the branching polytope for each sequence. The algorithm is faster than considering the whole branching polytope, as it effectively disregards a large proportion of secondary structures which do not correspond to the parameters in the slice of interest. While effective for our calculations, its complexity remains prohibitive for practical use for longer RNA sequences. Therefore, it would be advantageous to develop an algorithm that generates only a part of the slice, which would further speed up the computations by excluding a significant number of irrelevant structures for longer sequences. We plan to pursue this in future work.

\revend

\end{document}